\documentclass[11pt]{article}
\usepackage[utf8]{inputenc}
\usepackage[T1]{fontenc}
\usepackage[margin=1in]{geometry}
\usepackage{graphicx}
\graphicspath{{./}{../results/}}
\usepackage{booktabs}
\usepackage{amsmath,amssymb,amsthm}
\usepackage{authblk}
\usepackage[hidelinks]{hyperref}

\newcommand{\E}{\mathbb{E}}
\renewcommand{\Pr}{\mathbb{P}}
\newtheorem{proposition}{Proposition}
\newtheorem{lemma}{Lemma}
\newtheorem{corollary}{Corollary}
\newtheorem{remark}{Remark}
\title{How Much Do RF Drone Benchmarks Overstate?\\ A Controlled Study and Theory of Data Leakage in UAV Signal Identification}
\author{David Shulman\thanks{Email: david.shulman.research@gmail.com}}
\affil{\small\texttt{github.com/shulm/spectrahawk}}
\date{\today}
\begin{document}
\maketitle
\begin{abstract}
Radio-frequency (RF) sensing is a central modality for counter--unmanned-aerial-system (counter-UAS) defence, exploiting the control, telemetry, and video links between a drone and its operator. Reported accuracies for RF-based drone detection and identification are frequently very high, yet they are often produced by cross-validation that splits a small number of continuous recordings at the level of short segments, allowing near-duplicate slices of the same recording into both training and test partitions. We study this data-leakage pathology in depth, contributing both theory and measurement. We formalise the optimism of segment-level cross-validation and show, via Cover's function-counting theorem, that a classifier can memorise the recording-to-label map exactly while the number of independent recordings $R$ is small relative to the feature dimension $d$ (specifically while $2R \lesssim d$), so that naive accuracy approaches $1$ and the inflation gap approaches $1-\mathrm{ACC}^\star$, where $\mathrm{ACC}^\star$ is the Bayes accuracy; the inflation eases only as $R$ grows past this separability threshold. A controlled synthetic experiment (10 seeds) confirms the predicted curves: naive balanced accuracy rises from the Bayes level toward $1.0$ as a recording-specific nuisance grows, while honest recording-grouped evaluation declines to chance, with a gap reaching $\approx 0.5$. On the public DroneRF dataset, pooled leave-one-recording-out cross-validation shows drone type identification (AR vs.\ Bebop) collapsing from a naive macro-$F_1$ of $0.74$ to $0.46$ -- the two-class chance level -- and a leakage-pathway ablation attributes essentially all of the inflation to segment-level leakage. We conclude that RF drone-identification benchmarks must use recording-grouped evaluation and enough independent captures, and we release all code and a synthetic generator for reproducibility.
\end{abstract}

\section{Introduction}
Most consumer and commercial drones remain in continuous radio contact with their operators, producing structured RF emissions -- frequency-hopping control links, wideband video downlinks, and telemetry -- that a passive receiver can exploit for detection and identification. RF sensing offers range and signal richness and complements acoustic methods, which remain effective when a drone is radio-silent. A persistent obstacle to trustworthy progress, however, is evaluation rigour. Public RF drone datasets typically consist of a small number of continuous recordings, each cut into many short segments. When such segments are partitioned at random, segments from the \emph{same} recording -- sharing the same channel, the same ambient background, and the same device at a near-identical operating point -- appear in both the training and test partitions. A classifier can then succeed by recognising the recording rather than the drone, and the reported score reflects memorisation rather than generalisation.

This paper quantifies and \emph{explains} that effect. Beyond demonstrating leakage empirically, we ask: \emph{how large should the inflation be, and on what does it depend?} We give a theory that answers this for a transparent generative model, and we confirm it both in controlled simulation and on real data. Our contributions are:
\begin{enumerate}
\item a \textbf{formal account} of segment-level cross-validation optimism, decomposing it into a term governed by the recoverability of the recording identity (Section~\ref{sec:formal});
\item a \textbf{theoretical prediction} of the leakage-inflation curves: using Cover's theorem, the naive split memorises recordings while $2R \lesssim d$, so naive accuracy $\to 1$ and the gap $\to 1-\mathrm{ACC}^\star$, easing as $R$ grows (Section~\ref{sec:theory});
\item a \textbf{controlled synthetic experiment} (10 seeds) that confirms the predicted dependence on nuisance strength and on $R$ (Section~\ref{sec:controlled});
\item a \textbf{real-world case study} on DroneRF with pooled leave-one-recording-out cross-validation and a leakage-pathway ablation (Section~\ref{sec:dronerf}).
\end{enumerate}
The practical stakes are concrete. A counter-UAS operator who procures a detector advertised at $99\%$ identification accuracy, but whose evaluation leaked recordings, fields a system that performs near chance against drones flown in any new environment -- a failure mode that surfaces only after deployment. The discipline we advocate is therefore not pedantry but a procurement and operational-safety concern: the headline number and the fielded number can differ by the full inflation gap we quantify.

All experiments are released as an open, reproducible pipeline with a synthetic generator, so the central results run without any download.

\section{Related work}
RF-based drone detection and identification commonly transform captured segments into spectral or time--frequency features and apply classical or deep classifiers; reported accuracies on datasets such as DroneRF~\cite{dronerf} frequently exceed $95\%$. Data leakage -- the contamination of evaluation by information available at training time -- is a recognised and widespread cause of over-optimistic, non-reproducible machine-learning results~\cite{kapoor}, and grouped (subject-wise) cross-validation is the standard remedy in fields such as biosignal analysis. A closely related phenomenon is \emph{shortcut learning}~\cite{geirhos}, in which models exploit spurious features that do not generalise. Our separability analysis rests on Cover's classical function-counting theorem for linear dichotomies~\cite{cover}. We make these ideas concrete for RF drone recognition: we exhibit the mechanism in a controlled setting, predict its magnitude, quantify it on real data, and trace its sources. Grouped (subject- or recording-wise) cross-validation is the accepted safeguard in adjacent fields such as EEG and audio event detection, where repeated samples from one subject or session otherwise leak; our analysis supplies a quantitative account of \emph{how much} is at stake as a function of dataset structure.

\section{RF signal model and features}
\label{sec:rf}
\subsection{Emission and channel model}
A drone link is a passband signal represented at baseband by its in-phase and quadrature components, $x(t)=I(t)+jQ(t)$. Two structures dominate. A frequency-hopping control link visits a sequence of narrow channels,
\begin{equation}
x_{\mathrm{ctrl}}(t)=\sum_{h} g\!\left(t-hT_h\right)\,e^{\,j2\pi f_h t},
\end{equation}
with per-hop carrier $f_h$ drawn from a hop set, dwell $T_h$, and pulse shape $g$. A wideband video downlink is well modelled as multi-carrier (OFDM),
\begin{equation}
x_{\mathrm{vid}}(t)=\sum_{k} c_k\,e^{\,j2\pi k\,\Delta f\,t},
\end{equation}
occupying a broad contiguous band. The receiver observes
\begin{equation}
y(t) = (h_r * x)(t) + w_r(t),
\end{equation}
where the propagation channel $h_r$ and the ambient interference $w_r$ (other Wi-Fi/Bluetooth emitters, thermal noise) are \emph{specific to a recording} $r$. This recording-specific term is the physical origin of the nuisance that drives leakage: two segments of the same flight share $h_r$ and the statistics of $w_r$.

\subsection{Spectral representation and features}
From the short-time Fourier transform $X(m,k)$ we form the power spectrum $S(f)=\E_m|X(m,f)|^2$ and summarise it with interpretable, scale-aware descriptors. With normalised spectrum $p(f)=S(f)/\sum_f S(f)$,
\begin{align}
\text{band power} &: \quad P_B=\textstyle\sum_{f\in B} S(f), &
\text{centroid} &: \quad \mu=\textstyle\sum_f f\,p(f),\\
\text{flatness} &: \quad \mathrm{SFM}=\frac{\exp\!\big(\frac{1}{F}\sum_f \log S(f)\big)}{\frac{1}{F}\sum_f S(f)}, &
\text{$99\%$ bandwidth} &: \quad B_{99}=\min\{|B|: \textstyle\sum_{f\in B}S(f)\ge 0.99\,\textstyle\sum_f S(f)\}.
\end{align}
A frequency-hopping control link yields scattered narrow peaks (low flatness, narrow instantaneous occupancy); a video downlink yields a wide, flat band (high $B_{99}$). These features form the input to the baseline classifier.

\section{Problem formulation}
\label{sec:formal}
\subsection{Tasks and metrics}
\emph{Detection} is the binary decision of drone presence, characterised by the probability of detection $P_d$ and false-alarm rate $P_{fa}$; sweeping the threshold traces the receiver operating characteristic with area $\mathrm{AUC}=\int_0^1 P_d\,dP_{fa}$, and we report $P_d$ at a fixed $P_{fa}=1\%$. \emph{Identification} is multi-class, summarised by the macro-averaged $F_1$, $\mathrm{macro\text{-}}F_1=\frac{1}{C}\sum_{c}F_{1,c}$, computed once over predictions pooled across folds with fixed labels. To avoid imbalance artefacts we use balanced accuracy $\mathrm{bACC}=\frac{1}{C}\sum_{c}\mathrm{recall}_c$, for which chance is $1/C$.

\subsection{Leakage as estimator optimism}
Let a dataset be $\mathcal{D}=\{(\mathbf{x}_i,y_i,g_i)\}_{i=1}^{n}$, where $g_i$ indexes the continuous \emph{recording} from which segment $i$ was cut. For a learned predictor $\hat f$ the quantity of interest is the population risk $R(\hat f)=\E_{(\mathbf{x},y)\sim P}[\,\ell(\hat f(\mathbf{x}),y)\,]$ on \emph{new recordings}. A cross-validation scheme produces an estimate $\hat R$; its \emph{optimism} is $\mathrm{opt}=\E[\,R(\hat f)-\hat R\,]$. Two schemes differ only in how folds respect $g$:
\begin{itemize}
\item \textbf{Naive} (segment-level): folds ignore $g$, so for most test segments their recording also appears in training.
\item \textbf{Grouped}: folds are group-disjoint, $\mathcal{G}_{\mathrm{train}}\cap\mathcal{G}_{\mathrm{test}}=\varnothing$.
\end{itemize}
Because each recording carries a single label, the conditional law factorises as $\Pr(y\mid \mathbf{x},g)$, and a sufficiently expressive model trained under the naive scheme can exploit $\Pr(y\mid g)$, which is degenerate (a point mass), achieving near-zero \emph{apparent} risk. The grouped scheme withholds $g$ at test time and forces reliance on $\Pr(y\mid \mathbf{x})$. The optimism of the naive estimator is therefore approximately
\begin{equation}
\mathrm{opt}_{\mathrm{naive}} \;\approx\; R^\star \;-\; \underbrace{R_{\mathrm{mem}}}_{\approx\,0\ \text{when $g$ is recoverable}} \;=\; R^\star,
\label{eq:opt}
\end{equation}
where $R^\star$ is the Bayes risk on unseen recordings and $R_{\mathrm{mem}}$ is the apparent risk of the recording-memorising solution. Equation~\eqref{eq:opt} says the inflation is large exactly when (i) the true task is hard ($R^\star$ large) and (ii) the recording identity $g$ is recoverable from $\mathbf{x}$. Section~\ref{sec:theory} makes (ii) precise.

\subsection{Background: the optimism of cross-validation}
For a loss $\ell$ and a learning algorithm trained on a sample, the apparent (in-sample) error and the out-of-sample error differ by the \emph{optimism}, whose expectation is positive whenever evaluation reuses information from training. Ordinary $k$-fold cross-validation controls this when examples are exchangeable and independent. The grouped structure of segmented recordings violates exchangeability: examples within a recording are strongly dependent and the recording's label is constant, so a fold holding part of a recording at training time and the rest at test time measures within-recording fit rather than across-recording generalisation. Grouped cross-validation restores independence \emph{at the recording level}, at the cost of fewer effective evaluation units -- which is exactly why the number of independent recordings, not the number of segments, governs both the bias and the variance of the estimate. This is the lens through which the next section's results should be read.

\section{A theory of leakage optimism}
\label{sec:theory}
\subsection{Generative model}
We instantiate the above in feature space with known ground truth. We draw $R$ recordings per class; recording $r$ has label $y_r\in\{0,1\}$ and a per-recording nuisance $\boldsymbol{\nu}_r$. Each of its $S$ segments is
\begin{equation}
\mathbf{x} = \mathbf{m}_{y_r} + \boldsymbol{\nu}_r + \boldsymbol{\varepsilon}, \qquad \boldsymbol{\varepsilon}\sim\mathcal{N}(\mathbf{0},\sigma^2\mathbf{I}_d),
\label{eq:gen}
\end{equation}
where the class mean $\mathbf{m}_y$ separates the classes \emph{only} along axis $0$ by $\Delta$ (class $0$ at $-\Delta/2$, class $1$ at $+\Delta/2$), and the nuisance is confined to the non-discriminative axes, $\boldsymbol{\nu}_r\sim\mathcal{N}(\mathbf{0},\lambda^2\mathbf{I})$ on axes $1,\dots,d{-}1$ and $0$ on axis $0$. Thus the only feature that generalises across recordings is axis $0$. The nuisance is class-independent and changes no true separability; it exists only to make a recording identifiable, as a real channel $h_r$ and ambient $w_r$ do.

\begin{proposition}[Bayes accuracy]
\label{prop:bayes}
Under \eqref{eq:gen}, the optimal accuracy attainable on a segment from an unseen recording is $\mathrm{ACC}^\star=\Phi(\Delta/2\sigma)$, achieved by thresholding axis $0$; no function of the nuisance axes improves it.
\end{proposition}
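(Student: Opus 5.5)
We compute the Bayes rule directly from the joint law of a segment $\mathbf{x}$ drawn from a fresh recording whose nuisance $\boldsymbol{\nu}_r$ has never been observed. The key step is to marginalise $\boldsymbol{\nu}_r$ under \eqref{eq:gen}. Given $y$, write $\mathbf{x}=(x_0,\mathbf{x}_{1:d-1})$. Then $x_0 = \mp\Delta/2 + \varepsilon_0$ with $\varepsilon_0\sim\mathcal{N}(0,\sigma^2)$. Meanwhile $\mathbf{x}_{1:d-1} = \boldsymbol{\nu}_{r,1:d-1}+\boldsymbol{\varepsilon}_{1:d-1}\sim\mathcal{N}(\mathbf{0},(\lambda^2+\sigma^2)\mathbf{I}_{d-1})$. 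These two blocks are independent: the noise is isotropic and independent of the nuisance, and the nuisance vanishes on axis $0$. Crucially, the law of $\mathbf{x}_{1:d-1}$ does not depend on $y$, because $\mathbf{m}_y$ differs only on axis $0$ and $\boldsymbol{\nu}_r$ is drawn independently of the label. Hence the class-conditional density factorises as $p(\mathbf{x}\mid y)=p_y(x_0)\,q(\mathbf{x}_{1:d-1})$, with $q$ common to both classes.

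From the factorisation, the posterior $\Pr(y\mid\mathbf{x})$ depends on $\mathbf{x}$ only through $x_0$; the common factor $q$ cancels in the likelihood ratio. Formally, $x_0$ is a sufficient statistic, and $\mathbf{x}_{1:d-1}$ is independent of $(y,x_0)$. Consequently, for any classifier $f(\mathbf{x})$, randomising over $\mathbf{x}_{1:d-1}$ gives a (randomised) rule in $x_0$ alone with the same accuracy. Its accuracy is therefore bounded by that of the best rule based on $x_0$. This is the ``no function of the nuisance axes improves it'' clause.

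It remains to solve the scalar problem. With equal priors, since $R$ recordings are drawn per class, and equal variances $\sigma^2$, the log-likelihood ratio is linear in $x_0$, namely $\Delta x_0/\sigma^2$. The MAP rule is therefore the threshold $\hat y=\mathbf{1}[x_0>0]$. Its error on each class is $\Pr(\varepsilon_0>\Delta/2)=1-\Phi(\Delta/2\sigma)$, giving $\mathrm{ACC}^\star=\Phi(\Delta/2\sigma)$. The same value is obtained for balanced accuracy, since both class recalls coincide.

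The only delicate point is interpretive rather than computational. ``Unseen recording'' must mean that $\boldsymbol{\nu}_r$ is a fresh draw, independent of the training data. Only then is the relevant law the $\boldsymbol{\nu}$-marginal used above. I would state this explicitly, and remark that conditioning on a previously seen $\boldsymbol{\nu}_r$ is exactly the loophole the naive split exploits.
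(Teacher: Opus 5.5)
Your proof is correct and follows the same route as the paper's sketch. Axis $0$ is a class-dependent, equal-variance Gaussian that is independent of the nuisance, the remaining axes have a class-independent law, and the scalar Bayes rule thresholds $x_0$ at $0$ with error $\Phi(-\Delta/2\sigma)$. Your explicit factorisation $p(\mathbf{x}\mid y)=p_y(x_0)\,q(\mathbf{x}_{1:d-1})$ and your stated equal-prior assumption spell out what the paper leaves implicit: class-independence of the nuisance marginal alone would not suffice without the joint independence of $\mathbf{x}_{1:d-1}$ from $(y,x_0)$ that you verify.
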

\begin{proof}[Sketch]
Axis $0$ is $\mathcal{N}(\pm\Delta/2,\sigma^2)$ by class and is independent of $\boldsymbol{\nu}_r$; the Bayes rule for two equal-variance Gaussians separated by $\Delta$ has error $\Phi(-\Delta/2\sigma)$. The nuisance axes are identically distributed across classes (class-independent), hence carry no information about $y$ for an unseen recording.
\end{proof}

\subsection{When can the naive split memorise recordings?}
Under the naive scheme, segments of a training recording are visible, so the learner may instead solve the proxy task of mapping each segment to its recording's label using the nuisance axes. Whether this proxy is realisable is a question of linear separability of the $2R$ recording centroids $\{\boldsymbol{\nu}_r\}$ -- in general position in $\mathbb{R}^{d-1}$ -- by their labels.

\begin{lemma}[Cover~\cite{cover}]
\label{lem:cover}
For $N$ points in general position in $\mathbb{R}^{p}$ and a fixed labelling, the probability that the labelling is linearly separable (through the origin) is
\begin{equation}
C(N,p)=2^{\,1-N}\sum_{k=0}^{p-1}\binom{N-1}{k},
\end{equation}
which satisfies $C(N,p)\approx 1$ for $N\le p$ and $C(2p,p)=\tfrac12$.
\end{lemma}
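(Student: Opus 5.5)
We prove Cover's function-counting formula by counting. Let $C_{\mathrm{cnt}}(N,p)$ be the number of homogeneously linearly separable dichotomies of $N$ points in general position in $\mathbb{R}^p$. In general position any $p$ of the points are linearly independent. We will show that this count is independent of the particular configuration and satisfies a Pascal-type recursion. Since there are $2^N$ labellings in total, the stated probability is $C_{\mathrm{cnt}}(N,p)/2^N$. Equivalently, it is the probability that a fixed labelling is separable when the points are drawn from a distribution that is symmetric under sign flips of individual points. The target identity is
\begin{equation*}
C_{\mathrm{cnt}}(N,p)=2\sum_{k=0}^{p-1}\binom{N-1}{k}.
\end{equation*}

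\textbf{Step 1 (recursion).} Start from $N$ points and add a new point $\mathbf{x}_{N+1}$. Each separable dichotomy $D$ of the first $N$ points extends to at least one dichotomy of $N+1$ points. It extends to both labellings of $\mathbf{x}_{N+1}$ exactly when some separating hyperplane for $D$ passes through $\mathbf{x}_{N+1}$, because the hyperplane can then be tilted slightly either way.

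Now look at separating hyperplanes that contain $\mathbf{x}_{N+1}$. Project the first $N$ points onto the orthogonal complement of $\mathbf{x}_{N+1}$, which is a copy of $\mathbb{R}^{p-1}$. General position is preserved under this projection. Separating hyperplanes through $\mathbf{x}_{N+1}$ then correspond bijectively to homogeneous separating hyperplanes of the projected points. Hence the number of doubly extendable dichotomies is $C_{\mathrm{cnt}}(N,p-1)$, and
\begin{equation*}
C_{\mathrm{cnt}}(N+1,p)=C_{\mathrm{cnt}}(N,p)+C_{\mathrm{cnt}}(N,p-1).
\end{equation*}

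\textbf{Step 2 (base cases and induction).} The base cases are $C_{\mathrm{cnt}}(1,p)=2$ for $p\ge1$, and $C_{\mathrm{cnt}}(N,1)=2$, since on a line only the two orientations work. The closed form satisfies the same recursion by Pascal's rule,
\begin{equation*}
\binom{N}{k}=\binom{N-1}{k}+\binom{N-1}{k-1},
\end{equation*}
and it matches the base cases. Induction on $N+p$ then gives the closed form.

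\textbf{Step 3 (consequences).} For $N\le p$ the sum runs over all $k\le N-1$, so it equals $2^{N-1}$ and $C(N,p)=1$ exactly. This is consistent with the statement's claim $C(N,p)\approx1$, which holds with equality.

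For $N=2p$ we have
\begin{equation*}
\sum_{k=0}^{p-1}\binom{2p-1}{k}=\tfrac12\, 2^{2p-1},
\end{equation*}
by the symmetry $\binom{2p-1}{k}=\binom{2p-1}{2p-1-k}$: the $2p$ terms split into two halves of equal sum. Therefore $C(2p,p)=\tfrac12$.

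\textbf{Main obstacle.} The delicate part of Step 1 is the claim that the number of extendable dichotomies is independent of the configuration. Concretely, one must show that a separating hyperplane through $\mathbf{x}_{N+1}$ exists exactly when the projected dichotomy is separable in $\mathbb{R}^{p-1}$. One must also show that the perturbation argument needs general position, so that no other point lies on the tilted hyperplane. I would handle this by working with open cones of weight vectors $\{\mathbf{w}:\operatorname{sign}(\mathbf{w}^\top\mathbf{x}_i)=y_i\}$. The cone for $D$ meets the hyperplane $\{\mathbf{w}:\mathbf{w}^\top\mathbf{x}_{N+1}=0\}$ exactly when both extensions are nonempty, and general position keeps the relevant cones open and nondegenerate.
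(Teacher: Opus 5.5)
Your proof is correct and is essentially Cover's original counting argument: the recursion $C(N+1,p)=C(N,p)+C(N,p-1)$ via projection onto $\mathbf{x}_{N+1}^{\perp}$, closed by Pascal's rule. The paper gives no proof of its own and relies on that argument by citation. Two small refinements: the tilting step needs only strict separation of the first $N$ points and $\mathbf{x}_{N+1}\neq\mathbf{0}$, since general position is actually used to keep the projected points in general position (a dependence among $\le p-1$ projections lifts to one among $\le p$ original points). And your explicit reading of ``probability'' as either a uniformly random labelling or a sign-flip-symmetric point law is exactly what the statement needs to be meaningful.
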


\begin{proposition}[Predicted inflation]
\label{prop:infl}
Consider \eqref{eq:gen} in the regime $\lambda\gg\sigma$ (nuisance dominant). Then:
\begin{enumerate}
\item[(a)] With probability $C(2R,\,d{-}1)$, a linear classifier on the nuisance axes separates the $2R$ recordings by label; whenever it does, the naive cross-validated accuracy is $\approx 1$ (every test segment is routed by its recording), so the optimism gap is $\approx 1-\mathrm{ACC}^\star$.
\item[(b)] As $R$ increases past the threshold $2R\approx d{-}1$, separability -- hence exact memorisation -- fails with growing probability, and naive accuracy declines toward an SNR-limited value.
\item[(c)] The grouped accuracy is, in expectation, at most $\mathrm{ACC}^\star$ (Proposition~\ref{prop:bayes}) and tends to chance when finite training recordings make axis $0$ unrecoverable amid the $d{-}1$ dominant nuisance axes.
\end{enumerate}
\end{proposition}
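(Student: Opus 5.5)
The plan is to treat the three parts separately. Each reduces to either Lemma~\ref{lem:cover} or Proposition~\ref{prop:bayes}, plus a concentration argument made possible by $\lambda\gg\sigma$. Throughout, write a segment of recording $r$ as $\mathbf{x}=(x_0,\mathbf{z})$ with $\mathbf{z}=\boldsymbol{\nu}_r+\boldsymbol{\varepsilon}_{1:d-1}\in\mathbb{R}^{d-1}$. Within-recording spread is $O(\sigma\sqrt{d})$, and between-recording spread is of order $\lambda\sqrt{d}$.

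For (a), the centroids $\boldsymbol{\nu}_1,\dots,\boldsymbol{\nu}_{2R}$ are i.i.d.\ Gaussian in $\mathbb{R}^{d-1}$, so they are almost surely in general position. Because the labels are assigned independently of the nuisance, Lemma~\ref{lem:cover} applies with $N=2R$ and $p=d-1$: there is a separating direction $\mathbf{w}$ with probability $C(2R,d-1)$. On that event I would pick the max-margin separator. Its margin $\gamma$ scales with $\lambda$, since the whole configuration is $\lambda$ times a fixed Gaussian configuration. A segment's projection is $\mathbf{w}^\top\boldsymbol{\nu}_r+\mathbf{w}^\top\boldsymbol{\varepsilon}$, where $\mathbf{w}^\top\boldsymbol{\varepsilon}\sim\mathcal{N}(0,\sigma^2)$ for unit $\mathbf{w}$. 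Hence each segment is misrouted with probability at most $\Phi(-\gamma/\sigma)\to0$ as $\lambda/\sigma\to\infty$.

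To turn this into naive CV accuracy, note that under a segment-level split almost every test segment's recording also appears in training, typically with many of its $S$ segments. The trained linear classifier then effectively sees the $2R$ noisy clusters, and for large $S$ and $\lambda\gg\sigma$ these clusters are separable whenever the centroids are. So naive accuracy is $\approx1$. Subtracting the honest target $\mathrm{ACC}^\star$ from Proposition~\ref{prop:bayes} gives the gap $\approx1-\mathrm{ACC}^\star$.

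For (b), I would use the standard asymptotics of $C(N,p)$. The function $C(2R,d-1)$ is the tail of a $\mathrm{Bin}(2R-1,\tfrac12)$ distribution evaluated at $d-2$. It therefore decreases in $R$, equals $\tfrac12$ at $2R=2(d-1)$, and drops to $0$ at rate $\exp(-\Theta(R))$ beyond that by a Chernoff bound. On the non-separable event, some recordings must be misclassified by any linear rule on the nuisance axes. The learner must then fall back partly on axis $0$, whose per-segment accuracy is limited by the SNR. This is the step I expect to be the main obstacle, because "naive accuracy declines toward an SNR-limited value" is only heuristic. I would state it as a bound: the fraction of recordings a linear rule can route correctly on the nuisance axes is at most the maximal agreement of a halfspace with random labels, which tends to $\tfrac12+O(\sqrt{d/R})$. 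Combined with axis $0$, this gives naive accuracy at most roughly $\mathrm{ACC}^\star+O(\sqrt{d/R})$.

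For (c), a grouped test recording is independent of the training data, and its nuisance is class-independent. So for any trained $\hat f$, the quantity $\E[\text{accuracy}]$ is bounded by the Bayes accuracy for $y$ given $\mathbf{x}$ on a fresh recording, which is $\mathrm{ACC}^\star$ by Proposition~\ref{prop:bayes}. For the tendency to chance, I would argue via the estimation error of the learned direction $\hat{\mathbf{w}}$. With $2R$ effective training recordings and nuisance variance $\lambda^2$ on $d-1$ axes, the component of $\hat{\mathbf{w}}$ on axis $0$ is dominated by spurious nuisance components of size $\lambda\sqrt{d/R}$. Test accuracy is therefore about $\Phi\big(\tfrac{\Delta/2}{\sqrt{\sigma^2+\lambda^2 d/R}}\big)$-like, which tends to $\tfrac12$ as $\lambda\to\infty$ with $R$ fixed. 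I would present this heuristically, in the same spirit as the sketch of Proposition~\ref{prop:bayes}.
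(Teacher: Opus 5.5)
Your proposal follows the same route as the paper's sketch. It uses Cover's lemma on the recording centroids plus tight clustering of segments for (a), the decay of $C(2R,d-1)$ past $2R\approx d-1$ with a fallback to the axis-$0$ signal for (b), and the Bayes bound of Proposition~\ref{prop:bayes} plus ill-conditioned estimation of the axis-$0$ direction for (c), only with more quantitative detail (margin $\propto\lambda$, the binomial-tail form of $C$, and an explicit heuristic formula for the grouped accuracy). Your halfspace max-agreement argument in (b) is a useful strengthening, since the paper only shows that \emph{exact} separability fails, which by itself does not force the naive accuracy down; the ``combined with axis $0$'' step is most cleanly made rigorous by uniform convergence of per-recording accuracies over linear rules, which gives naive accuracy $\le \mathrm{ACC}^\star + O(\sqrt{d\log R/R})$ with high probability.
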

\begin{proof}[Sketch]
(a) When $\lambda\gg\sigma$, segments cluster tightly around their recording centroids $\boldsymbol{\nu}_r$; a hyperplane that separates the centroids by label classifies essentially all segments correctly, and Lemma~\ref{lem:cover} gives the probability such a hyperplane exists for the fixed labelling. The true test accuracy is unchanged at $\mathrm{ACC}^\star$, so the gap is $1-\mathrm{ACC}^\star$. For (b), once $2R$ exceeds $d{-}1$ the probability $C(2R,d{-}1)$ that the fixed labelling of the centroids is linearly separable decays from one toward zero (Lemma~\ref{lem:cover}, with $C(2(d{-}1),d{-}1)=\tfrac12$), so exact memorisation by a linear rule becomes impossible for a growing fraction of draws and the naive accuracy must fall from its ceiling toward whatever the residual axis-$0$ signal supports. For (c), the grouped test set shares no recording with training, so by Proposition~\ref{prop:bayes} no nuisance-based rule can exceed chance and the best attainable accuracy is $\mathrm{ACC}^\star$; moreover, estimating the $\Delta$-separated direction from training data is itself ill-conditioned when $d{-}1$ axes of variance $\lambda^2\gg\sigma^2$ dominate the empirical covariance and only a few independent recordings are available, so the realised grouped accuracy lies between chance and $\mathrm{ACC}^\star$ and approaches chance as $\lambda$ grows.
\end{proof}

\begin{corollary}[Few recordings are the worst case]
\label{cor:few}
The leakage inflation is maximised when the number of independent recordings is small relative to the feature dimension ($2R\lesssim d$) -- precisely the regime of public RF drone datasets, which contain only a handful of independent captures per class.
\end{corollary}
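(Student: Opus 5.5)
The plan is to write the inflation as a function of $R$ and the model parameters, using Propositions~\ref{prop:bayes} and~\ref{prop:infl}, and then show it is largest in the regime $2R\lesssim d$. Define the gap
\[
G(R,d)=\mathrm{ACC}_{\mathrm{naive}}(R,d)-\mathrm{ACC}_{\mathrm{grouped}}(R,d),
\]
taken in expectation over the draw of nuisances and segments, in the nuisance-dominant regime $\lambda\gg\sigma$.

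First I bound the naive accuracy from below. Condition on the event that the $2R$ centroids are linearly separable by label. By Proposition~\ref{prop:infl}(a), on this event the naive accuracy is $\approx 1$. On the complement I use only the trivial bound $\mathrm{ACC}_{\mathrm{naive}}\ge \tfrac12$. Then Lemma~\ref{lem:cover} gives
\[
\E[\mathrm{ACC}_{\mathrm{naive}}]\gtrsim C(2R,d{-}1)\cdot 1+\bigl(1-C(2R,d{-}1)\bigr)\tfrac12 .
\]
Second, I bound the grouped accuracy from above: by Proposition~\ref{prop:infl}(c), $\E[\mathrm{ACC}_{\mathrm{grouped}}]\le \mathrm{ACC}^\star$. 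Combining the two bounds,
\[
\E[G]\gtrsim \tfrac12\bigl(1+C(2R,d{-}1)\bigr)-\mathrm{ACC}^\star .
\]

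Third, I compare regimes. The function $C(N,p)$ is nonincreasing in $N$ for fixed $p$, which follows from Pascal's rule $C(N{+}1,p)=\tfrac12\bigl(C(N,p)+C(N,p{-}1)\bigr)$ together with $C(N,p{-}1)\le C(N,p)$. So the lower bound on the gap is monotonically largest at small $R$. For $2R\le d{-}1$ the bound equals its ceiling $1-\mathrm{ACC}^\star$, which is also an absolute upper bound on the gap, since naive accuracy is at most $1$ and grouped accuracy is at least about $\mathrm{ACC}^\star$ when axis $0$ is recoverable. This ceiling is the maximal possible inflation. Past $2R\approx d{-}1$, the bound drops to half the ceiling at $2R=2(d{-}1)$ and decays further, by Proposition~\ref{prop:infl}(b). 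The closing remark about public datasets is an empirical instantiation: a handful of captures per class against feature dimension $d$ puts them in $2R\lesssim d$. I would state that part descriptively rather than prove it.

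The main obstacle is the second half of the comparison. Maximality needs a matching statement that the gap actually shrinks for large $R$, not just that a lower bound shrinks. This requires an upper bound on the naive accuracy when the centroids are inseparable, and that bound must not depend on nonlinear memorisation. I would restrict to linear learners, consistent with Lemma~\ref{lem:cover}. For them, with large $R$ the best linear rule on the nuisance axes misroutes a constant fraction of recordings, so naive accuracy falls toward the axis-$0$ SNR-limited value. Meanwhile grouped accuracy rises toward $\mathrm{ACC}^\star$, because more recordings make the empirical covariance better conditioned. Both effects shrink $G$. If a clean quantitative bound proves elusive, I would state the corollary as maximality over $R$ of the guaranteed inflation $\tfrac12(1+C(2R,d{-}1))-\mathrm{ACC}^\star$, which is rigorous from the steps above.
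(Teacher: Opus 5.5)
The paper gives no separate proof. The corollary is read off Proposition~\ref{prop:infl}(a)--(b) together with Lemma~\ref{lem:cover}: $C(2R,d{-}1)=1$ exactly when $2R\le d{-}1$, so naive accuracy sits at its ceiling there and the gap is $1-\mathrm{ACC}^\star$. Your plan follows that route, and your total-probability lower bound and the monotonicity of $C(N,p)$ in $N$ via Pascal's rule are correct refinements.

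The step that fails is your claim that $1-\mathrm{ACC}^\star$ is an absolute upper bound on $G=\mathrm{ACC}_{\mathrm{naive}}-\mathrm{ACC}_{\mathrm{grouped}}$. That claim requires $\mathrm{ACC}_{\mathrm{grouped}}\gtrsim\mathrm{ACC}^\star$. But in the regime $\lambda\gg\sigma$ you fixed at the outset, Proposition~\ref{prop:infl}(c) says the opposite: axis $0$ is not recoverable from a few recordings, and grouped accuracy falls toward chance. Hence $G$ can approach $\tfrac12$, which exceeds $1-\mathrm{ACC}^\star$ whenever $\mathrm{ACC}^\star>\tfrac12$. Table~\ref{tab:lam} shows exactly this: naive $1.000$, grouped $\approx0.49$, while $1-\mathrm{ACC}^\star\approx0.35$. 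So your argument shows only that $\E[G]\gtrsim1-\mathrm{ACC}^\star$ at small $R$, not that small $R$ maximises $G$. An upper bound on $G$ needs a \emph{lower} bound on grouped accuracy, and you have only an upper one.

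The repair is to measure inflation as the paper does in Proposition~\ref{prop:infl}(a), against the fixed target $\mathrm{ACC}^\star$, i.e.\ as $\mathrm{ACC}_{\mathrm{naive}}-\mathrm{ACC}^\star$. Then $1-\mathrm{ACC}^\star$ is a genuine ceiling because $\mathrm{ACC}_{\mathrm{naive}}\le1$. Your bound $\tfrac12\bigl(1+C(2R,d{-}1)\bigr)-\mathrm{ACC}^\star$ attains it exactly when $2R\le d{-}1$, and part (c) is not needed at all. If you keep naive-minus-grouped, the ceiling is $1-\E[\mathrm{ACC}_{\mathrm{grouped}}]$. Small-$R$ maximality then additionally requires grouped accuracy to be nondecreasing in $R$, which nothing in the paper establishes.

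There are two smaller slips:
\begin{enumerate}
\item At $2R=2(d{-}1)$ your bound equals $\tfrac34-\mathrm{ACC}^\star$, which is half the ceiling only if $\mathrm{ACC}^\star=\tfrac12$. With $\mathrm{ACC}^\star\approx0.65$ it is $0.10$, and it becomes non-positive once $C(2R,d{-}1)\le2\mathrm{ACC}^\star-1$.
\item $\mathrm{ACC}_{\mathrm{naive}}\ge\tfrac12$ on the inseparable event is an assumption about the learner, not a triviality.
\end{enumerate}

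Finally, the converse you flag as the main obstacle is not settled by the paper either. Part (b) is heuristic, and the paper itself concedes that the Random Forest's naive accuracy stays high past the threshold. Your fallback formulation is therefore about as strong as the paper's own argument.
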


\begin{remark}
The deployed baseline is a (nonlinear) Random Forest, for which the linear bound of Lemma~\ref{lem:cover} is conservative: nonlinear models can memorise recordings even more readily, so the predicted naive inflation is a lower bound on what flexible models achieve. The theory therefore explains, rather than merely describes, the empirical curves below.
\end{remark}

\subsection{Detection ROC under the model}
For binary detection of presence using axis $0$, the score is Gaussian under each hypothesis, $z\mid y \sim \mathcal{N}(\pm\Delta/2,\sigma^2)$. The honest ROC is the standard shifted-Gaussian curve $P_d=\Phi\!\big(\Phi^{-1}(P_{fa})+\Delta/\sigma\big)$, with area
\begin{equation}
\mathrm{AUC}^\star=\Phi\!\left(\frac{\Delta}{\sigma\sqrt2}\right).
\end{equation}
The same axis-$0$ separation that fixes the Bayes accuracy (Proposition~\ref{prop:bayes}) thus fixes the honest detection ROC; under naive evaluation, recording memorisation again drives AUC toward $1$ independently of $\Delta$, by the mechanism of Proposition~\ref{prop:infl}. Detection and identification therefore inflate through the same channel.

\section{Controlled experiment}
\label{sec:controlled}
We instantiate \eqref{eq:gen} with $d=20$, $S=50$, $\sigma=1$, and $\Delta$ set so $\mathrm{ACC}^\star\approx0.65$. We compare a naive partition (\texttt{StratifiedKFold} over segments) with an honest partition (\texttt{StratifiedGroupKFold} keyed on recording), using a Random Forest (primary) and a Logistic Regression (corroboration), reporting balanced accuracy as mean $\pm$ 95\% CI over 10 dataset seeds.

Figure~\ref{fig:lam} and Table~\ref{tab:lam} sweep the nuisance $\lambda$ at $R=8$. As predicted by Proposition~\ref{prop:infl}(a), naive accuracy climbs monotonically toward $1.0$ as $\lambda$ grows, while honest evaluation declines to chance; the gap reaches $\approx0.5\approx 1-\mathrm{ACC}^\star$ minus a finite-sample residual. The Random Forest's honest accuracy approaches chance and can dip slightly below in individual finite-recording draws (shortcut learning~\cite{geirhos}); the regularised Logistic Regression declines toward chance while retaining only a weak vestige of the axis-$0$ signal.

\begin{table}[t]\centering
\caption{Controlled experiment: balanced accuracy (mean $\pm$ 95\% CI, 10 seeds) versus nuisance $\lambda$ at $R=8$. True Bayes $\mathrm{ACC}^\star\approx0.65$.}
\label{tab:lam}
\begin{tabular}{lccc}
\toprule
$\lambda$ & RF naive & RF grouped & LR grouped \\
\midrule
0.0 & $0.628\pm0.013$ & $0.617\pm0.010$ & $0.632\pm0.009$ \\
0.5 & $0.735\pm0.015$ & $0.600\pm0.018$ & $0.589\pm0.025$ \\
1.0 & $0.903\pm0.014$ & $0.566\pm0.040$ & $0.574\pm0.047$ \\
2.0 & $0.995\pm0.002$ & $0.515\pm0.068$ & $0.584\pm0.076$ \\
4.0 & $1.000\pm0.000$ & $0.497\pm0.063$ & $0.584\pm0.095$ \\
8.0 & $1.000\pm0.000$ & $0.492\pm0.071$ & $0.571\pm0.101$ \\
\bottomrule
\end{tabular}
\end{table}

\begin{figure}[t]\centering
\IfFileExists{leakage_vs_lambda.png}{\includegraphics[width=0.62\linewidth]{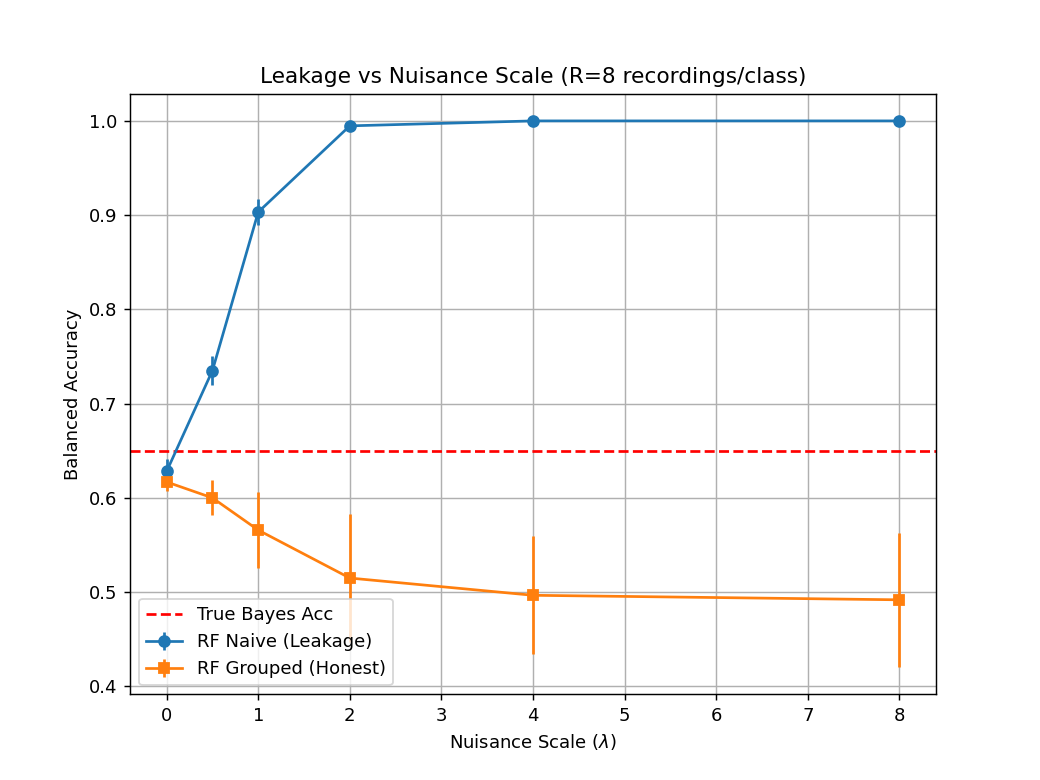}}{\fbox{\rule{0pt}{3.5cm}\rule{0.55\linewidth}{0pt}}}
\caption{Balanced accuracy versus nuisance $\lambda$ (10 seeds, 95\% CI). Naive cross-validation rises to $1.0$ by memorising recordings; honest grouped cross-validation declines to chance. The growing gap is the leakage inflation predicted by Proposition~\ref{prop:infl}.}
\label{fig:lam}
\end{figure}

Figure~\ref{fig:R} and Table~\ref{tab:R} sweep the number of recordings $R$ at fixed $\lambda=2$. Consistent with Proposition~\ref{prop:infl}(b) and Corollary~\ref{cor:few}, naive accuracy is highest at small $R$ (where $2R\lesssim d{-}1=19$, i.e.\ $R\lesssim9$, so recordings are linearly separable) and declines gently as $R$ grows past the threshold. Honest accuracy stays near chance at this nuisance level but its variance shrinks markedly (95\% CI from $\pm0.13$ at $R{=}2$ to $\pm0.03$ at $R{=}32$): more independent recordings stabilise the honest estimate even where the dominant nuisance prevents the weak true signal from being recovered.

\begin{table}[t]\centering
\caption{Controlled experiment: balanced accuracy (mean $\pm$ 95\% CI, 10 seeds) versus recordings $R$ per class, at $\lambda=2$.}
\label{tab:R}
\begin{tabular}{lcc}
\toprule
$R$ & RF naive & RF grouped \\
\midrule
2 & $0.999\pm0.002$ & $0.495\pm0.134$ \\
4 & $0.998\pm0.002$ & $0.608\pm0.083$ \\
8 & $0.995\pm0.002$ & $0.515\pm0.068$ \\
16 & $0.985\pm0.003$ & $0.502\pm0.048$ \\
32 & $0.972\pm0.003$ & $0.543\pm0.030$ \\
\bottomrule
\end{tabular}
\end{table}

\begin{figure}[t]\centering
\IfFileExists{leakage_vs_recordings.png}{\includegraphics[width=0.62\linewidth]{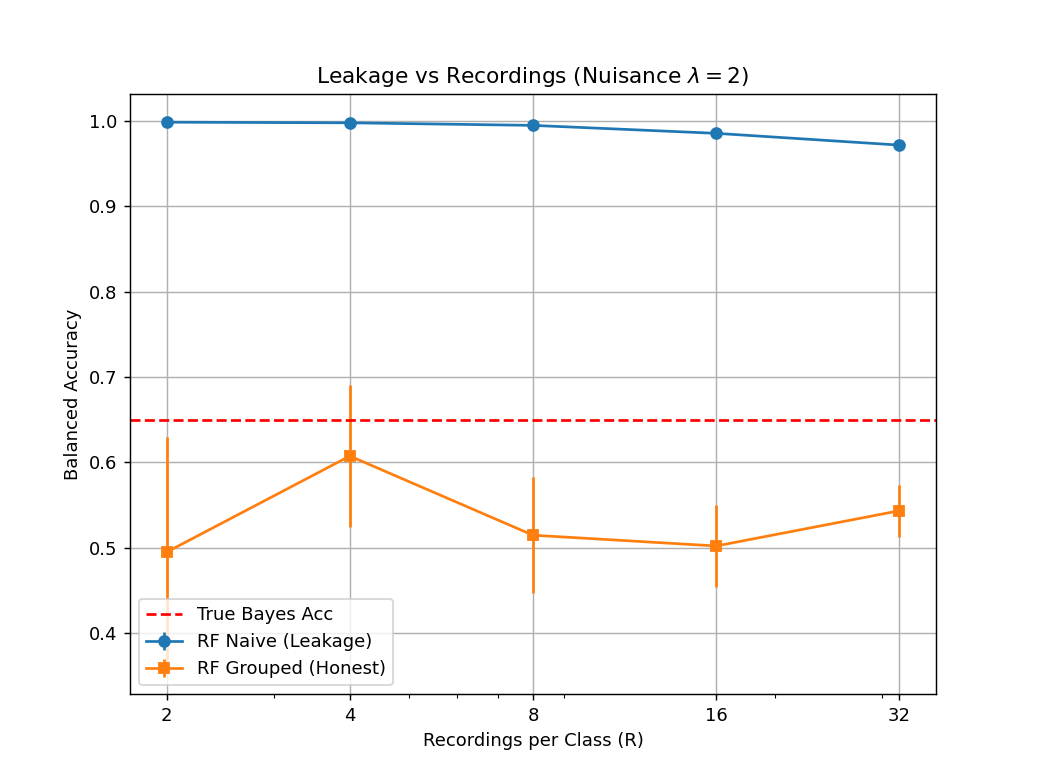}}{\fbox{\rule{0pt}{3.5cm}\rule{0.55\linewidth}{0pt}}}
\caption{Balanced accuracy versus number of recordings $R$ ($\lambda=2$, 10 seeds, 95\% CI). Naive inflation is largest below the Cover separability threshold ($2R\approx d$) and eases as $R$ grows; the honest estimate's variance shrinks with $R$.}
\label{fig:R}
\end{figure}

\subsection{The separability bound versus the data}
Figure~\ref{fig:cover} overlays the linear separability probability $C(2R,d{-}1)$ (Lemma~\ref{lem:cover}, $d=20$) with the empirical naive accuracy. The bound predicts that exact recording memorisation by a linear classifier is certain up to $R\approx 8$ and collapses beyond the threshold $2R\approx d$. The measured naive accuracy stays high well past this point, consistent with Remark~1 (Section~\ref{sec:theory}): the nonlinear Random Forest memorises recordings even where no linear separator exists, so the linear threshold is a \emph{conservative} lower bound on the leakage regime. The qualitative agreement -- inflation guaranteed below the threshold and persisting above it for flexible models -- is exactly what the theory predicts.

\begin{figure}[t]\centering
\IfFileExists{cover_vs_R.png}{\includegraphics[width=0.60\linewidth]{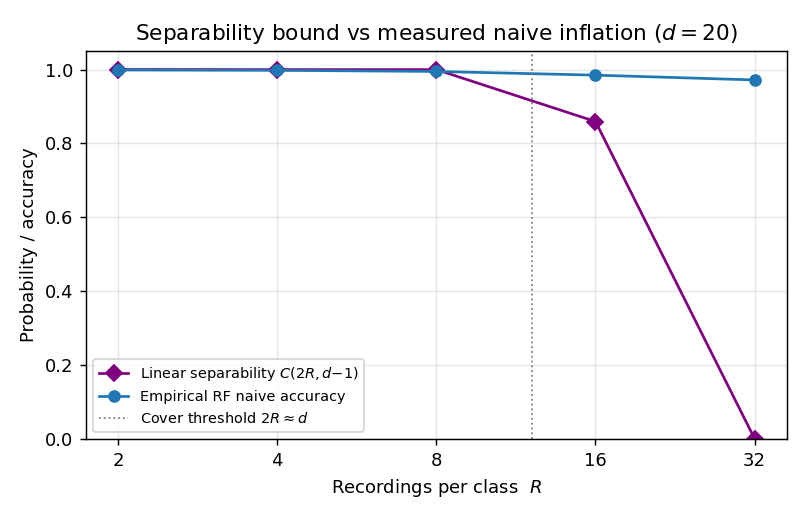}}{\fbox{\rule{0pt}{3.5cm}\rule{0.55\linewidth}{0pt}}}
\caption{Theory versus data. The linear separability probability $C(2R,d{-}1)$ (Cover) collapses past $2R\approx d$, but the nonlinear Random Forest's naive accuracy remains high, confirming that the linear threshold lower-bounds the leakage regime.}
\label{fig:cover}
\end{figure}

\section{Real-world case study: DroneRF}
\label{sec:dronerf}
\subsection{Dataset}
DroneRF~\cite{dronerf} provides $2.4$\,GHz recordings of three drones (AR, Bebop, Phantom) and a no-drone background. The band is captured by two receivers -- a low half and a high half -- \emph{simultaneously, from the same flight}; we merge the two band-halves into a single recording group. After merging, the number of \emph{independent} recordings is small: one (background), four (AR), four (Bebop), and one (Phantom). Hence a single background recording precludes a leakage-free detection split (detection is reported drone-side-grouped with residual background leakage), and honest identification is limited to AR vs.\ Bebop (Phantom, one recording, is excluded from the grouped metric). By Corollary~\ref{cor:few}, this few-recording regime is exactly where leakage inflation is largest.

\subsection{Leave-one-recording-out evaluation}
We evaluate with leave-one-recording-out cross-validation, pooling the out-of-fold predictions across folds and computing a single macro-$F_1$ with fixed labels (95\% CI by bootstrap over pooled predictions), since each held-out recording is a single class. Honest identification of AR versus Bebop collapses from a naive macro-$F_1$ of $0.74$ to $0.46$ -- the two-class chance level -- once whole recordings are held out (Table~\ref{tab:loro}). Detection AUC barely changes (the background still leaks, by necessity), and its $1\%$-false-alarm detection rate is highly variable across folds.

\begin{table}[t]\centering
\caption{DroneRF results: Type-ID pooled out-of-fold (95\% CI by bootstrap); Detection per-fold mean $\pm$ std.}
\label{tab:loro}
\begin{tabular}{llcc}
\toprule
Task & Metric & Naive & Honest \\
\midrule
Type-ID (AR vs Bebop) & macro-$F_1$ & $0.742\,[.728,.755]$ & $\mathbf{0.455\,[.439,.470]}$ \\
Detection (drone-side grp., & ROC-AUC & $0.985\pm0.005$ & $0.978\pm0.017$ \\
\ resid.\ bg.\ leakage) & $P_d$@$1\%$FA & $0.764\pm0.279$ & $0.719\pm0.275$ \\
\bottomrule
\end{tabular}
\end{table}

\subsection{Leakage-pathway ablation}
Table~\ref{tab:abl} tightens the grouping in three steps (pooled out-of-fold macro-$F_1$). Segment-level evaluation reports $0.739$; grouping by file (band-halves separate) drops it to $0.527$; merging the two band-halves of a flight changes it only marginally to $0.524$. Essentially all of the inflation is \emph{segment-level} leakage within a recording; the band-half merge contributes little under this metric. This is an attribution based on the dataset's known structure (H and L are simultaneous captures of one flight), not a randomized intervention.

\begin{table}[t]\centering
\caption{Leakage-pathway ablation on DroneRF (AR vs Bebop), pooled out-of-fold macro-$F_1$.}
\label{tab:abl}
\begin{tabular}{lc}
\toprule
Grouping level & macro-$F_1$ \\
\midrule
L0 -- segment-level (no grouping) & 0.739 \\
L1 -- file-level (band-halves separate) & 0.527 \\
L2 -- recording-level (band-halves merged) & 0.524 \\
\bottomrule
\end{tabular}
\end{table}

\section{Discussion}
Theory and measurement agree. The optimism decomposition \eqref{eq:opt} predicts large inflation when the task is hard and the recording is identifiable; the Cover analysis (Proposition~\ref{prop:infl}) makes the second condition quantitative and ties it to the ratio of independent recordings to feature dimension; the controlled experiment confirms the $\lambda$- and $R$-dependence; and DroneRF, with only a few independent recordings per class, exhibits exactly the predicted collapse to chance under honest evaluation. As a worked estimate, DroneRF identification has $R=4$ recordings per class, so $2R=8$; with a feature dimension of order $d\sim20$ the separability index $2R/d\approx0.4$ places the dataset deep in the memorisation regime, $C(2R,d{-}1)\approx1$. The theory thus predicts that naive evaluation can recover the recording-to-label map almost perfectly and inflate accuracy by up to $1-\mathrm{ACC}^\star$ above the honest level; the measured jump from a chance-level honest macro-$F_1$ of $0.46$ to a naive $0.74$ is consistent with this once the nonlinearity of the Random Forest (Remark~1) is accounted for. Three practical recommendations follow. First, evaluate RF drone classifiers with \emph{recording-grouped} cross-validation, and merge simultaneous captures of one flight (e.g.\ band-halves) into a single group. Second, report naive and grouped scores side by side so that any inflation is visible. Third -- the message of Corollary~\ref{cor:few} -- ensure the dataset contains enough \emph{independent} recordings that grouped estimates are both honest and stable; a benchmark with a handful of captures per class cannot support a confident identification claim, however high its naive accuracy.

\paragraph{Limitations.} The controlled model is a feature-space idealisation; real features are nonlinear functions of IQ with correlated nuisance, though this only strengthens memorisation (Remark~1). The DroneRF detection result remains drone-side-grouped because the dataset has a single background recording. We attempted to escalate to the richer DroneDetect dataset~\cite{dronedetect}, which has multiple captures per model, but it is access-restricted and lacks a drone-free background class. The grouped estimates on so few recordings are necessarily high-variance, which is itself part of the message.

\section{Implications for existing RF benchmarks}
Corollary~\ref{cor:few} bears directly on the public RF datasets we engaged with. DroneRF has at most four independent recordings per drone after band-halves are merged -- squarely in the few-recording regime where inflation is largest. A widely used noisy-RF classification benchmark is distributed as pre-sliced segments with \emph{no} recoverable recording identity, so grouped evaluation is impossible by construction and any reported accuracy is, in our terms, an upper bound dominated by leakage. The multi-capture DroneDetect set is better structured but is access-restricted and lacks a drone-free background class. In each case the practical conclusion is the same: high naive accuracy is not evidence of drone-discriminative learning unless accompanied by recording-grouped scores and a sufficient count of independent captures.

\section{A recommended evaluation protocol}
The analysis yields a concrete protocol for RF drone benchmarks.
\begin{enumerate}
\item \textbf{Identify the recording.} From filenames or metadata, determine the continuous capture each segment belongs to, and treat simultaneous sub-captures of one event (e.g.\ band-halves) as a single recording.
\item \textbf{Group, do not shuffle.} Partition by recording (\texttt{GroupKFold} or leave-one-recording-out), never by segment.
\item \textbf{Pool, then score.} For leave-one-recording-out with single-class folds, pool out-of-fold predictions and compute one metric with fixed labels; report a bootstrap confidence interval.
\item \textbf{Report both.} Always present naive and grouped scores side by side; their gap is the leakage estimate.
\item \textbf{Check the regime.} Report the number of independent recordings per class; if $2R\lesssim d$ (Corollary~\ref{cor:few}), treat identification claims as unsupported regardless of accuracy.
\end{enumerate}

\section{Reproducibility}
All experiments are released at \texttt{github.com/shulm/spectrahawk}: the synthetic generator that produces the controlled study with no download, deterministic seeds for every table, the DroneRF loader with recording-level grouping (band-halves merged), and the leave-one-recording-out and ablation scripts. The synthetic figures and Tables~\ref{tab:lam}--\ref{tab:R} reproduce exactly from a fixed seed; the DroneRF tables reproduce given the public dataset.

\section{Conclusion}
We gave a theory and a controlled-plus-real confirmation of data leakage in RF-based drone identification. Naive cross-validation can report near-perfect accuracy that is pure recording memorisation, with an inflation that our analysis predicts from the number of independent recordings relative to the feature dimension, and that we measure at up to $\approx0.5$ in simulation and as a collapse from $0.74$ to chance on DroneRF. The pitfall, and its remedy, generalise across drone-signature modalities. All code, figures, and the synthetic generator are released for reproduction.

\end{document}